\pgfplotsset{width=7cm,compat=1.3}
\newcommand{\stkout}[1]{\ifmmode\text{\sout{\ensuremath{#1}}}\else\sout{#1}\fi}
\newcommand{\shorten}[1]{}
\newtheorem{proposition}{Proposition}
\newtheorem{definition}{Definition}
\newtheorem{lemma}{Lemma}
\newtheorem{example}{Example}
\newcommand{\signed}%
    {{\unskip\nobreak\hfill\penalty50
      \hskip2em\hbox{}\nobreak\hfil $\blacksquare$
      \parfillskip=0pt \finalhyphendemerits=0 \par}}
\newenvironment{proof}[1]
    {
    \bf{Proof:}\rm{\noindent{#1 }}\ignorespaces
    }
    {\signed\addvspace\medskipamount}
\definecolor{LightCyan}{rgb}{0.88,1,1}
\definecolor{Gray}{gray}{0.9}
\definecolor{Gray1}{rgb}{0.94, 0.97, 1.0}
\definecolor{Gray2}{rgb}{0.7, 0.75, 0.71}
\begin{document}

\title{An Explicit Construction of Systematic MDS Codes with Small Sub-packetization for All-Node Repair}

%

\author{\IEEEauthorblockN{
Katina Kralevska and Danilo Gligoroski
}
\IEEEauthorblockA{
Dep. of Information Security and Communication Technologies, NTNU, Norwegian University of Science and Technology\\
Email: \{katinak, danilog\}@ntnu.no\\
}
}

\maketitle

\begin{abstract}
An explicit construction of systematic MDS codes, called HashTag+ codes, with arbitrary sub-packetization level for all-node repair is proposed. It is shown that even for small sub-packetization levels, HashTag+ codes achieve the optimal MSR point for repair of any parity node, while the repair bandwidth for a single systematic node depends on the sub-packetization level. Compared to other codes in the literature, HashTag+ codes provide from 20\% to 40\% savings in the average amount of data accessed and transferred during repair.
\end{abstract}

\ {\bfseries {Index Terms}}: Explicit, systematic, MDS, MSR, small sub-packetization, all-node repair, access-optimal.

%
\IEEEpeerreviewmaketitle

\section{Introduction} \label{intro}

Redundancy is essential to ensure reliability in distributed storage systems. 
Maximum Distance Separable (MDS) codes are optimal erasure codes in terms of the redundancy-reliability tradeoff. In particular, a \emph{$(n, k)$ MDS code} tolerates the maximum number of failures, up to $r=n-k$ failed nodes, for the added redundancy of $r$ nodes. A \emph{systematic} $(n, k)$ MDS code is applied in such a way that the original data is equally divided into $k$ parts without encoding and stored into $k$ nodes, called systematic nodes, and $r$ linear combinations of the $k$ parts are stored into $r$ nodes, called parity nodes. In addition to their redundancy-reliability optimality, systematic MDS codes are preferred in practical systems because data access from the systematic nodes can be done instantly without decoding.

Conventional MDS codes do not perform well in terms of the \emph{repair bandwidth} defined as the amount of data that is transferred during a node repair. 
Dimakis et al. \cite{5550492} proved that the lower bound of the repair bandwidth $\gamma$ for a single node with a $(n, k)$ MDS code is:
\begin{equation}
\gamma_{MSR}^{min} \geq \frac{M}{k} \frac{n-1}{n-k},
\label{optimal}
\end{equation}
where $M$ is the file size.
The equality is met when a fraction of $\sfrac{1}{r}$-th of the stored data is transferred from all $n-1$ non-failed nodes. Minimum Storage Regenerating (MSR) codes satisfy the equality and they operate at the MSR point.

The exponential sub-packetization level is a fundamental limitation of any high-rate MSR code. The sub-packetization levels are $\alpha=r^{\sfrac{k}{r}}$ and $\alpha=r^{\sfrac{n}{r}}$ for optimal repair of systematic nodes and optimal repair of both systematic and parity nodes (all-node repair) \cite{6737213}, respectively. Large sub-packetization levels bring multiple practical challenges such as high I/O, high repair time, expensive computations, and difficult management of meta-data.
Thus constructing high-rate MDS codes with small sub-packetization levels has attracted a lot of attention in the recent years. Table \ref{comparison1} summarizes several high-rate MDS codes with small sub-packetization \cite{7949040, 7463553, 8025778, DBLP:journals/corr/abs-1709-08216,210530}. Three piggyback designs were presented in \cite{7463553}. For the purpose of this paper, we compare with piggyback design 2 that optimizes all-node repair for $r\geq 3$ and sub-packetization of $(2r-3)m$ where $m\geq 1$. HashTag codes \cite{7463553,8025778} repair the systematic nodes with the lowest repair bandwidth in the literature for arbitrary sub-packetization $2\leq \alpha\leq r^{\lceil\sfrac{k}{r}\rceil}$. Rawat et al. presented two approaches for all-node repair in \cite{DBLP:journals/corr/abs-1709-08216}. The second approach, that is more flexible in terms of the sub-packetization, requires MSR codes and error correcting codes with specific parameters to obtain $\epsilon$-MSR codes. However, codes with such specific parameters may not always be available. Additionally, there is a tradeoff between $\epsilon$ and the length of the code. Clay codes were recently presented in \cite{210530}. They are optimized for all-node repair. However, Clay codes require an exponential sub-packetization level, and for sub-packetization levels lower than the maximal exponential value, they are just MDS codes that do not achieve the optimal MSR point neither for the data nodes nor for the parity nodes.
It is observed in \cite{Rashmi:2014:HGF:2619239.2626325} that 98.08\% of the failures in Facebook's data-warehouse cluster that consists of thousands of nodes are single failures. Thus, we optimize the repair for single failures of any systematic or parity node.

\renewcommand{\arraystretch}{1.5}
\begin{table*}[t]
	\caption{Comparison of HashTag+ codes with existing MDS codes with small sub-packetization for $n-1$ helper nodes. } \label{comparison1}
		\begin{center}
		\begin{tabular}{|p{20mm}|l|p{15mm}|p{20mm}|p{23mm}|l|p{22mm}|}
	\hline
	Code & Systematic & Explicit \newline construction & Number of parities $r$  & Sub-packetization \newline $\alpha$ & All-node repair & Optimal parity \newline repair for small $\alpha$ \\
	\hline
	Piggyback 2 \cite{7949040} & Yes & Yes & $r \geq 3$ & $(2r-3)m, m\geq 1$ & Yes & No\\
	\hline
	HashTag \cite{8025778} & Yes & Yes & $r\geq 2$ & $2\leq \alpha \leq r^{\lceil \sfrac{k}{r}\rceil}$ & No & No\\ 
	\hline
		Rawat et al. \cite{DBLP:journals/corr/abs-1709-08216} & Yes & Yes & \textbf{$r \geq 2$} & $r^\tau, \tau\geq 1$ & Yes & No\\
	\hline
	Clay codes \cite{210530} & Yes & Yes & $r\geq 2$ & $\alpha \leq r^{\sfrac{n}{r}}$ & Yes & No\\
	\hline
	\textbf{HashTag+} & Yes & Yes & $r\geq 2$ & $4\leq \alpha \leq r^{\lceil\sfrac{n}{r}\rceil}$ & Yes & Yes\\
	\hline
		\end{tabular}
\end{center}
\end{table*}

In this paper we present a family of MDS codes called HashTag+ codes with the following properties: 1. They are systematic MDS codes; 2. They are exact-repairable codes; 3. They have a high-rate; 4. They have a flexible sub-packetization ($4 \leq \alpha\leq r^{\lceil\sfrac{n}{r}\rceil}$); 5. They achieve the MSR point for repair of single parity node for sub-packetization levels lower than or equal to the maximal exponential value of $r^{\lceil\sfrac{n}{r}\rceil}$; 6. They achieve the MSR point for repair of single systematic node for $\alpha=r^{\lceil\sfrac{n}{r}\rceil}$ and repair near-optimally for $\alpha<r^{\lceil\sfrac{n}{r}\rceil}$; 7. They are access-optimal (access and transfer the same amount of data). We combine the framework proposed by Li et al. \cite{8006804} and the family of MDS codes called HashTag codes \cite{8025778}. Compared to the work by Li et al. \cite{8006804} where they focus on MSR codes with the maximal sub-packetization level $\alpha = r^{\lceil\frac{n}{r}\rceil}$, we construct explicit codes for the whole range of sub-packetization levels $4\leq \alpha \leq r^{\lceil\frac{n}{r}\rceil}$ motivated by the practical importance of codes with small sub-packetization levels.


The rest of the paper is organized as follows. Section \ref{code} presents HashTag+ code construction by first giving two examples and then presenting a general algorithm and performance comparison between HashTag+ and state-of-the-art codes. Section \ref{conc} concludes the paper.

\textbf{\textit{Notations}}. For two integers $0<i<j$, we denote the set $\{i,i+1,\ldots, j \}$ by $[i:j]$, while the set $\{0,1,\ldots, j-1 \}$ is denoted by $[j]$. 
Vectors and matrices are denoted with a bold font.

\section{HashTag+ Code Construction} \label{code}
We now present two examples of HashTag+ codes with the maximal and a small sub-packetization, and we later give algorithms for general code construction and repair. An appealing feature of HashTag+ codes is that they support any values of code parameters $k$, $r\geq 2$, and sub-packetization $4\leq \alpha \leq r^{\lceil\frac{n}{r}\rceil}$ including cases where $r$ does not divide $n$. 

\begin{example}
Consider a $(6, 4)$ HashTag MDS code with $\alpha=2^{\sfrac{4}{2}}=4$ as a base code. The code given in Fig.\ref{general0} is generated with Alg. 1 from \cite{8025778} where the coefficients are from the finite field $\mathbb{F}_{16}$ with irreducible polynomial $x^4+x^3+1$. This code achieves the bound in Eq. (\ref{optimal}) for repair of any single systematic node, i.e., 10 symbols are read and transferred for repair of 4 symbols of any systematic node. The repair of a single parity node is the same as Reed-Solomon codes, i.e., 16 symbols are read and transferred for repair of 4 parity symbols. The goal is to construct a code that provides optimal repair of the parity nodes as well (all-node repair).

\begin{figure*}
	    \centering
	\includegraphics[width=7in]{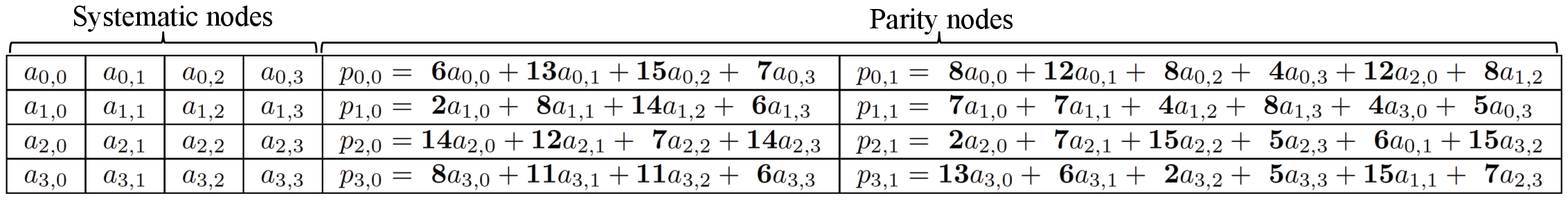}
	    \caption{A systematic $(6, 4)$ HashTag MDS code with $\alpha=4$.}
	    \label{general0}
	\end{figure*}	

In a first step, we generate $r-1=1$ additional instances of the $(6, 4)$ HashTag MDS code with $\alpha=4$ by using Alg. 1 from \cite{8025778}. The data of the first systematic node $d_0$ stored in instance $0$ is $a_{0,0}, \ldots, a_{3,0}$ and in instance $2$ is $a_{4,0}, \ldots, a_{7,0}$ as it is shown in Fig. \ref{general1}. In this way, we obtain a $(6, 4)$ code with sub-packetization level of $r\times \alpha = 2\times 4 = 8$. A systematic node $d_j$, $j=0,\ldots,3$, comprises the symbols $a_{i,j}$ from the two instances where $i=0,\ldots, 7$ and $j=0,\ldots,3,$ and a parity node $p_l$, $l=0,1,$ comprises the symbols $p_{i,l}$ from the two instances where $i=0,\ldots, 7$ and $l=0,1$. Note that the base code from above has been renamed to instance 0.
	\begin{figure*}[h!]
	    \centering
	\includegraphics[width=7in]{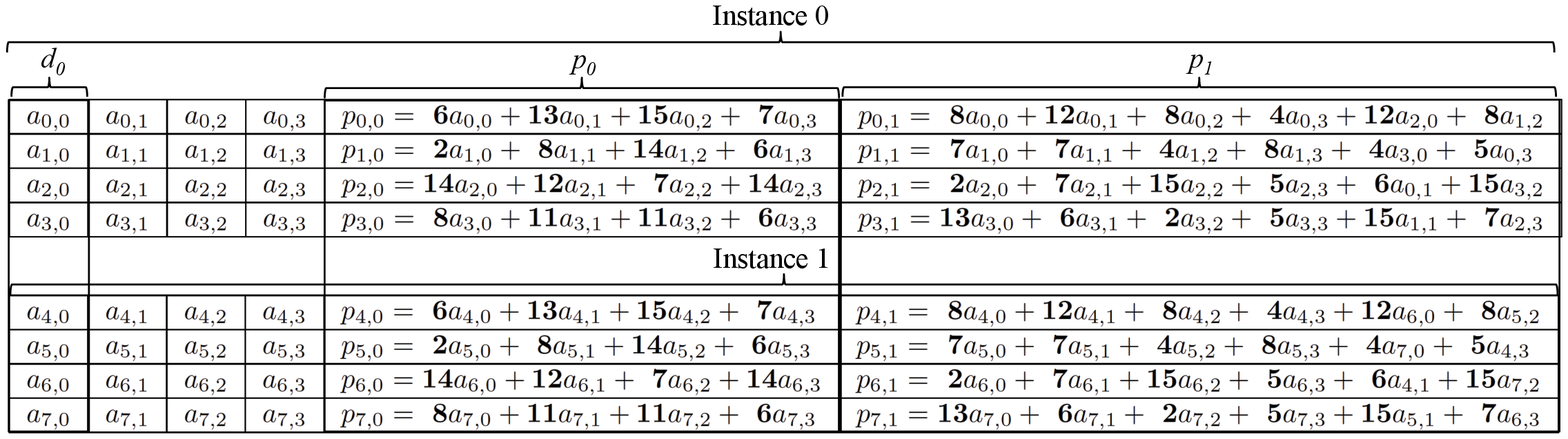}
	\caption{Two instances of a $(6, 4)$ HashTag code with $\alpha = 8$.}
	\label{general1}
	\end{figure*}

In the second step, we permute the data in the two instances of the parity nodes $p_0$ and $p_1$. First, this data is represented as $p_l^{(i)}$ where the superscript $i$ denotes the instance and the subscript $l$ denotes the parity node. 
Then, the permutation is as follows: $p_l^{(i)}\rightarrow p_{l+i}^{(i)}$ where the index arithmetic is cyclic, i.e., modulo $r$ (for example $l+i=3\rightarrow l+i=1$).
    \vspace{0.2cm}
	\begin{center}
	\includegraphics[width=2.3in]{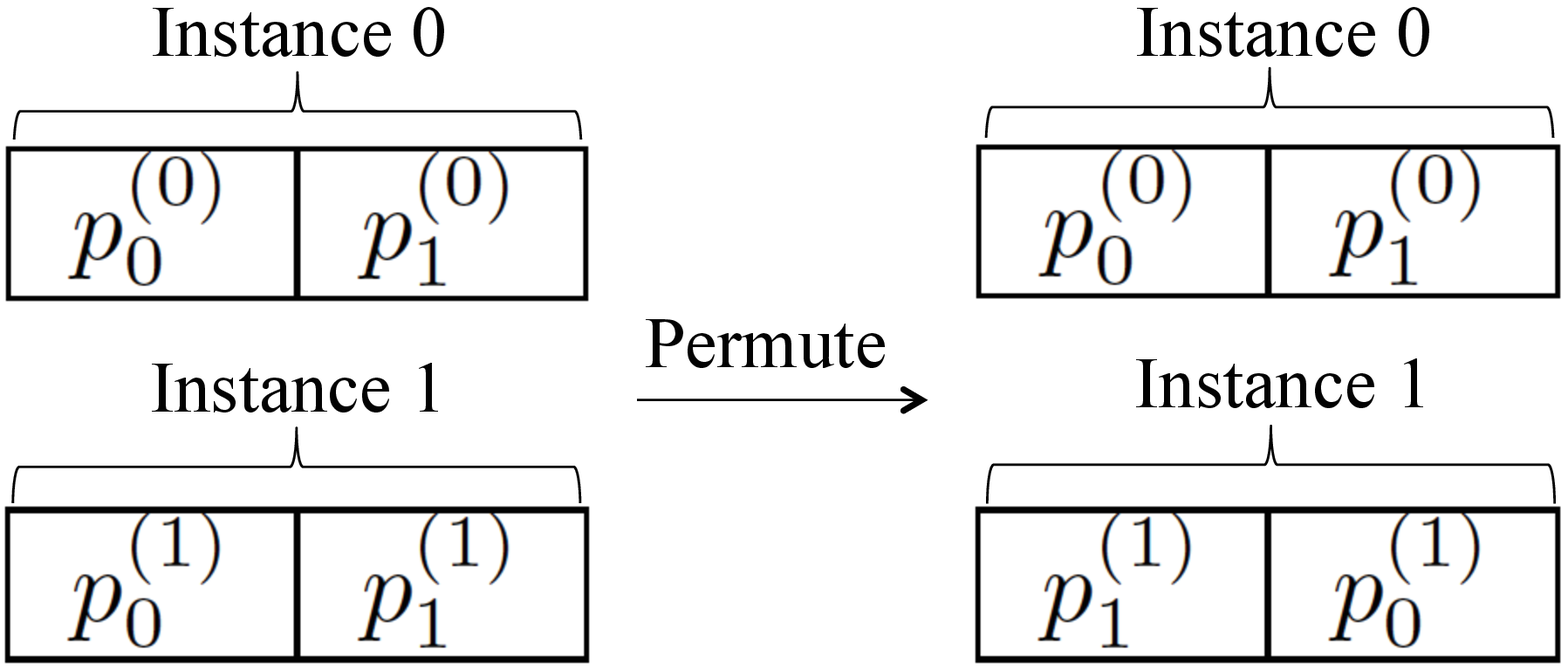}
	\label{general2}
	  \vspace{0.2cm}
\end{center}

In the third step, the data from the parity nodes is paired following this rule:
\begin{equation}\label{eq:PairingRule}
p_l^{(i)} = \left\{
\begin{array}{ll}
p_l^{(i)}, & \text{if } i = l,\\
\theta_{l, i} p_l^{(i)} + p_i^{(l)}, & \text{ otherwise }
\end{array} \right.
\end{equation} where $\{\theta_{l, i}, \theta_{i,l}\} \subseteq \{1, \theta \}$ and $\theta \in \mathbb{F}_{16} \setminus \{0, 1\}$. 
The bidirectional arrows in the figure below shows which parity parts are paired together. This completes the code generation. 
\vspace{0.2cm}
\begin{center}
	\includegraphics[width=2.6in]{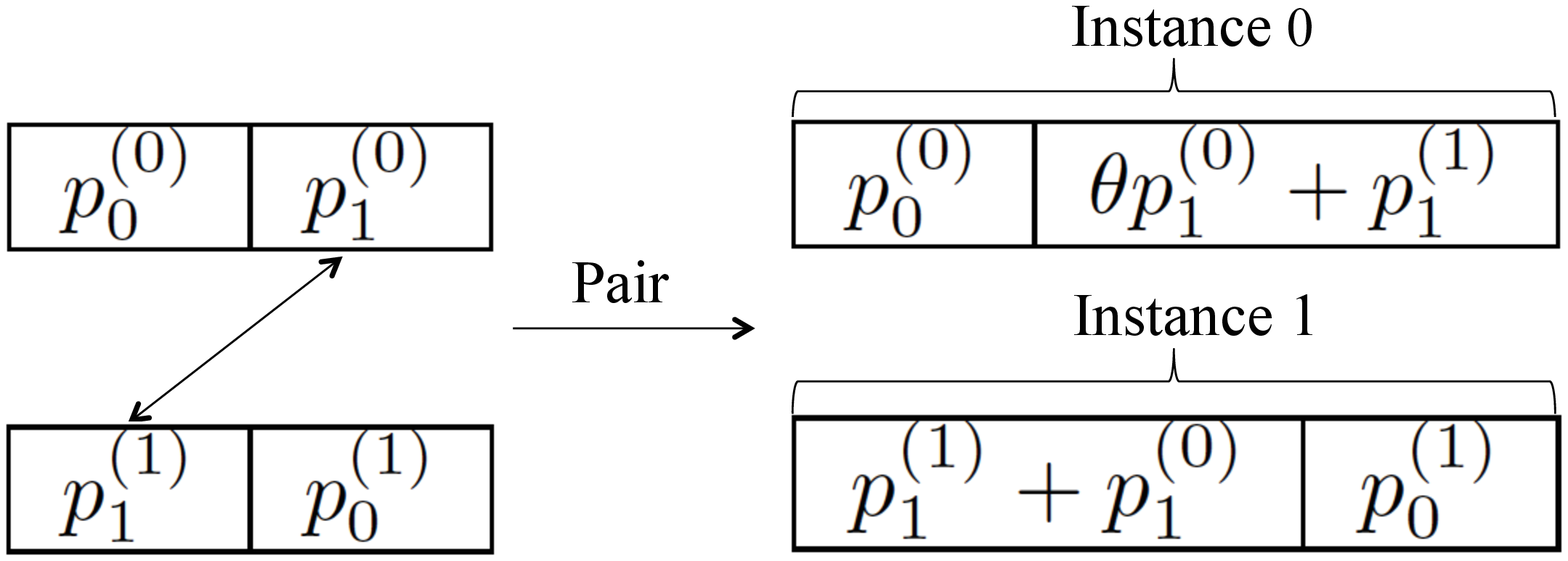}
	\label{general3}
	  \vspace{0.2cm}
\end{center}

The final $(6, 4)$ HashTag+ code with $\alpha=8$ that provides optimal all-node repair is given in Fig. \ref{final8stripes}. 

We now illustrate that this code recovers optimally any systematic or parity node. Let us assume that node $d_0$ has failed. In order to recover $a_{0,0}, a_{1,0}$, we transfer 6 symbols $a_{0,1}, a_{1,1}, a_{0,2}, a_{1,2}, a_{0,3}, a_{1,3}$ from instance 0 of the non-failed systematic nodes and 2 non-paired symbols $p_{0,0}, p_{1,0}$ from the parity nodes. Next we recover $a_{4,0}, a_{5,0}$ by downloading 6 symbols $a_{4,1}, a_{5,1}, a_{4,2}, a_{5,2}, a_{4,3}, a_{5,3}$ from instance 1 of the systematic nodes and 2 non-paired symbols $p_{4,0}, p_{5,0}$ from the parity nodes.
To recover the remaining symbols $a_{2,0}, a_{3,0}$ from instance 0, we transfer the paired symbols $\theta p_{0,1}+p_{4,1}, p_{4,1}+p_{0,1}$ and solve $2\times 2$ system of linear equations. In a similar manner we recover the last two symbols $a_{6,0}, a_{7,0}$ by transferring the paired symbols $p_{5,1}+p_{1,1}, \theta p_{1,1}+p_{5,1}$. Thus, the repair of $d_0$ (or any other systematic node)  requires 20 symbols in total, and it achieves the bound in Eq.(\ref{optimal}).

The same amount of data is transferred when repairing the parity nodes $p_0$ or $p_1$. We first repair the unpaired symbols $p_{0,0}, p_{1,0}, p_{2,0}, p_{3,0}$ from instance 0 by transferring all 16 symbols from instance 0 of the systematic nodes $a_{0,0}, a_{1,0}, a_{2,0}, a_{3,0}, a_{0,1}, a_{1,1}, a_{2,1}, a_{3,1}, a_{0,2}, a_{1,2}, a_{2,2}, a_{3,2},$\newline $a_{0,3}, a_{1,3}, a_{2,3}, a_{3,3}$. Next the paired symbols from $p_0$ are recovered by downloading the 4 symbols from instance 0 of $p_1$. In total, 20 symbols are read and transferred for repair of 8 symbols from $p_0$. 

\begin{figure*}[t!] \label{final8stripes} 
	\begin{center}
	    \vspace{0.1cm}
		\includegraphics[width=7.1in]{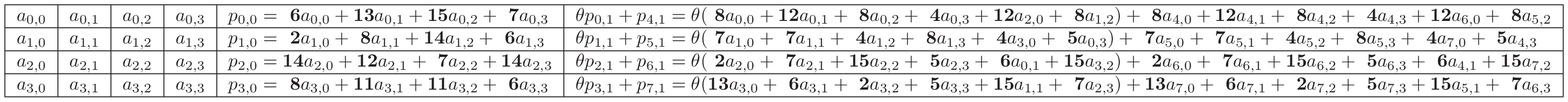}
	\end{center}
	\begin{center}
	    \vspace{0.1cm}
		\includegraphics[width=7.1in]{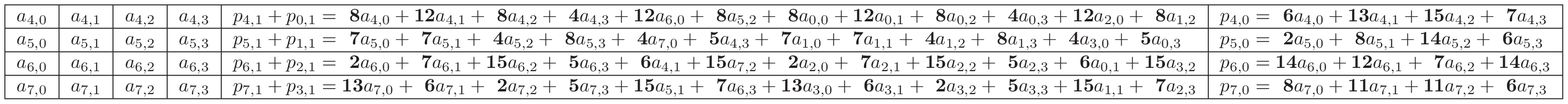}
		\caption{Two instances of a $(6, 4)$ HashTag+ code with $\alpha=8$ where $\theta \in \mathbb{F}_{16} \setminus \{0, 1\}$. }
	\end{center}
\end{figure*}
\end{example}

\begin{example}
We next give a $(6, 4)$ HashTag+ code with $\alpha=4$ in Fig. 4. The code is obtained by following the steps from the previous example where the base code is a $(6, 4)$ HashTag code with $\alpha=2$. Note that the sub-packetization level in this example is lower than the optimal one in Example 1. The goal is to illustrate that the code achieves the MSR point when repairing a single parity node although the sub-packetization is small. 

Repairing any systematic node is near-optimal, i.e., 12 symbols for repair of 4 symbols. Let us assume that node $d_0$ has failed. In order to recover $a_{0,0}$, we transfer 3 symbols $a_{0,1}, a_{0,2}, a_{0,3}$ from instance 0 of the non-failed systematic nodes and 1 non-paired symbol $p_{0,0}$ from the parity node $p_0$. Next we recover $a_{2,0}$ by downloading 3 symbols $a_{2,1}, a_{2,2}, a_{2,3}$ from instance 1 of the systematic nodes and 1 non-paired symbol $p_{2,0}$ from the parity node $p_1$.
To recover the remaining symbols $a_{1,0}, a_{3,0}$ from instance 0 and 1, we transfer the paired symbols $\theta p_{0,1}+p_{2,1}, p_{2,1}+p_{0,1}$ and $a_{1,2}, a_{3,2}$ (due to the small sub-packetization level) and solve $2\times 2$ system of linear equations. Thus, the repair of $d_0$ (or any other systematic node) requires 12 symbols in total, and the repair bandwidth of the systematic nodes is the same as that of the base code (HashTag code).

However, the repair bandwidth for any parity node achieves the lower bound in Eq.(\ref{optimal}). In particular, all 4 symbols from $p_0$ are repaired by transferring all 8 symbols from instance 0 of the systematic nodes $a_{0,0}, a_{1,0}, a_{0,1}, a_{1,1}, a_{0,2}, a_{1,2}, a_{0,3}, a_{1,3}$ and 2 symbols $\theta p_{0,1}+p_{2,1}$ and $\theta p_{1,1}+p_{3,1}$ from instance 0 of $p_1$. In total, 10 symbols are read and transferred for repair of 4 symbols from $p_0$. Repair of $p_1$ requires the same amount of repair bandwidth.

\begin{figure*} \label{final4substripes} 
	\begin{center}
	    \vspace{0.1cm}
		\includegraphics[width=7.1in]{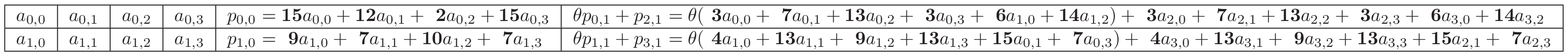}
	\end{center}
	\begin{center}
	    \vspace{-0.3cm}
		\includegraphics[width=7.15in]{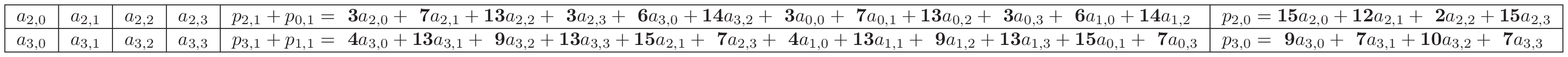}
		\caption{Two instances of a $(6, 4)$ HashTag+ code with $\alpha=4$ where $\theta \in \mathbb{F}_{16} \setminus \{0, 1\} $.}
	\end{center}
\end{figure*}
\end{example}

\subsection{General Code Construction}
Consider a file of size $M = k \alpha$ symbols from a finite field $\mathbb{F}_q$ stored in $k$ systematic nodes $d_j$ of capacity $\alpha$ symbols. We start the construction with a HashTag code \cite{7463553,8025778} as a base code that is defined as follows.

\begin{definition}\label{HashTagCodes}
	A $(n,k)_q$ HashTag linear code is a vector systematic code defined over an alphabet $\mathbb{F}_q^\alpha$ for some $2 \leq \alpha \leq r^{\lceil\sfrac{k}{r}\rceil}$. It encodes a vector $\mathbf{x} = (\mathbf{x}_0,\ldots,\mathbf{x}_{k-1})$, where $\mathbf{x}_i = (x_{0,i}, x_{1,i},\ldots,x_{\alpha-1,i})^T \in \mathbb{F}_q^\alpha$ for $i \in [k] $, to a codeword $\mathcal{C}(\mathbf{x}) = \mathbf{c} = (\mathbf{c}_0, \mathbf{c}_1, \ldots, \mathbf{c}_{n-1})$ where the systematic parts $\mathbf{c}_i=\mathbf{x}_i$ for $i \in [k]$ and the parity parts $\mathbf{c}_i=(c_{0,i}, c_{1,i},\ldots,c_{\alpha-1,i})^T$ for $i \in [k:n-1]$ are computed by the linear expressions that have a general form as follows:
	\begin{equation}\label{LinEquations}
	c_{j,i}=\sum f_{\nu,j, i} x_{j_1,j_2},\\
	\end{equation}
	where $f_{\nu,j, i}\in \mathbb{F}_q$ and the index pair $(j_1,j_2)$ is defined in the $j$-th row of the index array $\mathbf{P}_{i-r-1}$ where $\nu \in [r]$. The $r$ index arrays $\mathbf{P}_0,\ldots,\mathbf{P}_{r-1}$ are defined as follows:
	\begin{equation*}
	\hspace{-0.7cm}
	\mathbf{P}_0=
	\begin{bmatrix}\footnotesize
	(0, 0) & (0, 1) & \ldots & (0, k-1)\\
	(1, 0) & (1, 1) & \ldots & (1, k-1)\\
	\vdotswithin{1} & \vdotswithin{\alpha_n} & \ddots & \vdotswithin{{\alpha_n}^{k-1}}\\
	(\alpha-1, 0) & (\alpha-1, 1) & \ldots & (\alpha-1, k-1)\\
	\end{bmatrix},
	\end{equation*}
	$$\ \ \ \ \ \ \ \ \ \ \ \ \ \ \ \ \ \ \ \ \ \ \ \ \ \ \ \ \ \ \ \ \ \ \ \ \ \ \ \ \ \overbrace{\ \ \ \ \ \ \ \ \ \ \ \ \ \ \ \ \ \ \ \ }^{\lceil \frac{k}{r} \rceil}$$
	\begin{equation*}\footnotesize
	\mathbf{P}_i=
	\begin{bmatrix}
	(0, 0) & \ldots & (0, k-1) &  (?, ?) & \ldots & (?, ?) \\
	(1, 0) & \ldots & (1, k-1) & (?, ?) & \ldots & (?, ?) \\
	\vdotswithin{1} & \vdotswithin{\alpha_n} & \ddots & \vdotswithin{{\alpha_n}^{k-1}}\\
	(\alpha-1, 0) & \ldots & (\alpha-1, k-1) & (?, ?) & \ldots & (?, ?) \\
	\end{bmatrix}.
	\end{equation*}
	where the values of the indexes $(?, ?)$ are determined by a scheduling algorithm that guarantees the code is MDS, i.e. the entire information $\mathbf{x}$ can be recovered from any $k$ out of the $n$ vectors $\mathbf{c}_i$. In addition, the algorithm ensures optimal or near-optimal repair by scheduling the indexes of the elements from $\mathbf{x}_i$ into $\lceil\sfrac{\alpha}{r}\rceil$ rows in the $r-1$ index arrays $\mathbf{P}_j$ where $j=1,\ldots,r-1$. $\blacksquare$
\end{definition}
The scheduling algorithm for Def. \ref{HashTagCodes} is presented in \cite{7463553,8025778}. Note that in the original presentation the indexing of the arrays is from $1$ to $r$ but in order to synchronize with the transformation of Li et al. \cite{8006804} here we use the indexing of the arrays from 0 to $r-1$.
The set of all symbols in $d_j$ is partitioned in disjunctive subsets where at least one subset has $\lceil \sfrac{\alpha}{r}\rceil$ number of elements.
The set of indexes $D=\{1,\ldots,\alpha\}$, where the $i-$th index of $a_{i,j}$ from $d_j$ is represented by $i$ in $D$, is partitioned in $r$ disjunctive subsets $D = \cup_{\rho=1}^{r}D_{\rho,d_j}$ where at least one subset has $\lceil \sfrac{\alpha}{r}\rceil$ elements. One subset $D_{\rho,d_j}$ is assigned per disk. The indexes in $D_{\rho,d_j}$ are the row positions where the pairs $(i,j)$ with indexes $i \in \mathcal{D} \setminus D_{\rho,d_j}$ are scheduled (the zero pairs are replaces with concrete $(i, j)$ pairs).
By using the code defined in Def. 1 as a base code, we next define HashTag+ code.
\begin{definition}
A $(n,k)_q$ HashTag+ linear code is a vector systematic code defined over an alphabet $\mathbb{F}_q^\alpha$ for some $4 \leq \alpha \leq r^{\lceil\sfrac{n}{r}\rceil}$.	
\end{definition}

The algorithm for constructing a $(n, k)$ HashTag+ code is given in Alg. 1.
\begin{algorithm}
	\small
	\caption{HashTag+ code construction
		\newline
		\textbf{Input:} $(n, k)$ HashTag code with sub-packetization $\alpha$
		\newline
		\textbf{Output:} $(n, k)$ HashTag+ code with sub-packetization $r \times \alpha$}
	\label{AlgConstruct}
	\begin{algorithmic}[1]
		\State Construct $r-1$ additional instances of a $(n, k)$ HashTag code with sub-packetization $\alpha$;
		\State Permute the data from the $i$-th instance in the $l$-th parity node as $p_l^{(i)}\rightarrow p_{l+i}^{(i)}$;
		\State Compute the parity parts $p_{l}^{(i)}$ with the rule in Eq.(\ref{eq:PairingRule}).
	\end{algorithmic}
\end{algorithm}

 The construction of HashTag+ codes given in Alg. 1 is sound and there always exists a finite field $\mathbb{F}_q$ and a set of non-zero coefficients from the field such that the HashTag+ code is MDS due to the following Lemma:
\begin{lemma}\label{MDS}
	There exists a choice of non-zero coefficients $c_{l,i,j}$ where $l=1, \ldots, r,$ $i=1, \ldots, \alpha$ and $j=1, \ldots, k$ from $\mathbb{F}_q$ such that the code is MDS if $q \geq \binom {n} {k} r \alpha$.
\end{lemma}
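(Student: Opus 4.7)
The plan is to treat the coefficients $c_{l,i,j}$, together with the scalars $\theta_{l,i}$ appearing in the pairing rule Eq.~(\ref{eq:PairingRule}), as formal indeterminates and to apply a standard Schwartz--Zippel / union-bound argument. First I would write the HashTag+ code via its $k\alpha \times n\alpha$ generator matrix $\mathbf{G}$ over $\mathbb{F}_q$; the MDS property is equivalent to requiring, for every subset $S \subseteq [n]$ with $|S|=k$, that the $k\alpha \times k\alpha$ submatrix $\mathbf{G}_S$ obtained by selecting the $\alpha$ columns of each node in $S$ is invertible. Since the code is systematic, if $S$ contains $k-t$ systematic nodes and $t$ parity nodes ($0 \le t \le r$), then up to row/column permutations $\det(\mathbf{G}_S)$ reduces to the determinant of a $t\alpha \times t\alpha$ block whose entries are $\mathbb{F}_q$-linear forms in the $c_{l,i,j}$; hence this determinant is a polynomial of total degree at most $t\alpha \le r\alpha$ in the unknowns.

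Second, I would verify that each such determinant is a non-zero polynomial. This step rests on three facts: (i)~the base HashTag code of Def.~\ref{HashTagCodes} is already MDS, so the analogous per-instance determinants are non-zero polynomials in the base-code coefficients; (ii)~the rotation $p_l^{(i)}\mapsto p_{l+i}^{(i)}$ on line~2 of Alg.~\ref{AlgConstruct} is a mere relabeling of rows and cannot destroy non-singularity; and (iii)~for each index pair $(l,i)$ the $2\times 2$ transformation in Eq.~(\ref{eq:PairingRule}) relating $(p_l^{(i)},p_i^{(l)})$ to its paired image has determinant $\theta_{l,i}\theta_{i,l}-1$, which is a non-trivial polynomial in the $\theta$'s. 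Composing (i)--(iii) gives a concrete witness assignment certifying that each determinant is non-vanishing as a formal polynomial.

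Finally, forming the product of all $\binom{n}{k}$ determinants yields a non-zero polynomial of total degree at most $\binom{n}{k}\, r\, \alpha$. By the Schwartz--Zippel lemma, whenever $q \ge \binom{n}{k}\, r\, \alpha$ there exists an assignment over $\mathbb{F}_q^\ast$ on which none of the determinants vanish, yielding a set of non-zero coefficients for which every $\mathbf{G}_S$ is invertible and hence the code is MDS. The hardest part will be step~(iii) above: ensuring that the paired parity rows remain linearly independent from the surviving systematic rows uniformly across all subsets $S$. I would resolve this by first fixing the $\theta_{l,i}$ generically so that the pairing acts as a bijective $\mathbb{F}_q$-linear map on each pair of parity stripes; this reduces MDS-ness of HashTag+ to MDS-ness of its $r$ constituent HashTag instances, which is already established.
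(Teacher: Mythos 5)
Your proof sketch is correct in substance but takes a genuinely different and more self-contained route than the paper. The paper's proof is a two-line citation: it invokes Theorem~1 of \cite{8025778} to get $q$ large enough that the base HashTag code is MDS, and then appeals to Theorems~2 and~3 of \cite{8006804}, which assert that the Li et al.\ permute-and-pair transformation preserves both the MDS property and optimal repair/access. You instead unpack what those cited theorems must actually establish and re-derive the result by a direct Schwartz--Zippel/union-bound argument on the generator-matrix minors, combined with the structural observations that (a) the base code's per-instance minors are non-vanishing polynomials, (b) the cyclic rotation $p_l^{(i)}\mapsto p_{l+i}^{(i)}$ cannot create a bad subset because for each instance the $r$ parity stripes remain distributed one per parity node, and (c) the pairing in Eq.~(\ref{eq:PairingRule}) is an intra-node invertible linear change of coordinates (determinant $\theta_{l,i}\theta_{i,l}-1\neq 0$) and hence cannot affect what is recoverable from any fixed set of $k$ nodes. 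That reduction is exactly the right structural insight, and it is what \cite{8006804} also exploits; you have essentially reconstructed the content of the cited theorems.

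Two points deserve tightening. First, in step~(ii) you describe the rotation as ``a mere relabeling of rows,'' which undersells what is actually happening: the rotation moves whole stripes \emph{between} parity nodes, so it permutes columns across node blocks in the generator matrix, not just rows. The rotation preserves MDS not because it is a trivial relabeling but because any $t$ surviving parity nodes still collect $t$ pairwise-distinct parity functions per instance; this is worth stating explicitly. Second, once you have made the reduction of your final paragraph precise---(c) is node-local and invertible, (b) yields a per-instance decomposition---you no longer need to run Schwartz--Zippel over the whole HashTag+ minors at all: it suffices to apply it to the base HashTag code (exactly what \cite{8025778} does), and the lemma's field-size bound is inherited. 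The Schwartz--Zippel framing of the overall HashTag+ code is therefore a harmless redundancy in your write-up, but you should double-check the degree accounting if you keep it: with $\alpha$ the HashTag+ sub-packetization your degree bound $\binom{n}{k}r\alpha$ matches the lemma, but if $\alpha$ is read as the base-code sub-packetization (as the index range $i=1,\dots,\alpha$ in the lemma statement suggests) your count picks up an extra factor of $r$.
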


\begin{proof}
	It is sufficient to combine Theorem 1 from \cite{8025778} about the base HashTag codes and Theorem 2 and 3 from \cite{8006804}. Namely, Theorem 1 from \cite{8025778} guarantees that the size of the finite field for the base HashTag code is sufficient to be $q \geq \binom {n} {k} r \alpha$ in order to find a HashTag MDS code. Then, according to Theorem 2 and 3 from \cite{8006804} the HashTag+ code has optimal repair bandwidth, has optimal rebuilding access and is a MDS code.
\end{proof}


\subsection{Repair of systematic nodes}
Alg. \ref{AlgRepair} shows the repair of a systematic node where the systematic and the parity nodes are global variables.
A set of $\lceil \sfrac{\alpha}{r^2} \rceil$ symbols is accessed and transferred from all $n-1$ non-failed nodes from each instance. 

\begin{proposition} \label{bw}
	The repair bandwidth for a single systematic node $\gamma_s$ is bounded between the following lower and upper bounds:
	\begin{equation}
	\label{LowerUpperBounds}
	\frac{(n-1)}{\alpha} \lceil \frac{\alpha}{r} \rceil \leq \gamma_s \leq \frac{(n-1)}{\alpha} \lceil \frac{\alpha}{r} \rceil + \frac{(r-1)}{\alpha} \lceil \frac{\alpha}{r} \rceil
	\lceil \frac{k}{r} \rceil.
	\end{equation}
\end{proposition}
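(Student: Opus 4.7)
The plan is to bound $\gamma_s$ from both sides by tracing Algorithm 2 and invoking the bandwidth analysis of the underlying base HashTag code.

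For the lower bound, I would rely on the access pattern of Algorithm 2 stated just before the proposition: every non-failed node contributes $\lceil\alpha/r^2\rceil$ symbols per base HashTag instance. Summed over the $r$ instances and the $n-1$ helpers, the total is at least $(n-1)\,r\lceil\alpha/r^2\rceil$. The elementary inequality $r\lceil\alpha/r^2\rceil\geq\lceil\alpha/r\rceil$ (verified by writing $\alpha=r^2q+s$ with $0\leq s<r^2$) then yields $\gamma_s\alpha\geq(n-1)\lceil\alpha/r\rceil$, matching the stated lower bound. This is consistent with the MSR cut-set bound (\ref{optimal}) applied to the $(n,k)$ HashTag+ MDS code of sub-packetization $\alpha$.

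For the upper bound, I would track the repair symbol by symbol. The $\alpha$ symbols of $d_j$ split into $r$ sub-blocks of $\alpha_{\rm base}=\alpha/r$ symbols, one per base HashTag instance. For each instance, the base HashTag scheduler of \cite{8025778} reads $\lceil\alpha/r^2\rceil$ symbols from each of the $n-1$ helpers (the MSR contribution), plus at most $\lceil k/r\rceil\lceil\alpha/r^2\rceil$ extra systematic symbols whenever the scheduling is forced to use the ``?-slots'' in one of the index arrays $\mathbf{P}_\nu$ ($\nu\geq 1$) from Definition \ref{HashTagCodes}. Because $\mathbf{P}_0$ has no such slots, at most $r-1$ of the arrays contribute this overhead. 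Summing over the $r$ sub-blocks and collapsing ceilings through $r\lceil\alpha/r^2\rceil\leq\lceil\alpha/r\rceil+(r-1)$ (the slack being absorbed into the overhead term whenever $r^2\nmid\alpha$), and noting that each paired-parity download $\theta p_l^{(i)}+p_i^{(l)}$ counts as a single symbol and produces an invertible $r\times r$ linear system via (\ref{eq:PairingRule}) thanks to $\theta\in\mathbb{F}_q\setminus\{0,1\}$, one arrives at $\gamma_s\alpha\leq(n-1)\lceil\alpha/r\rceil+(r-1)\lceil\alpha/r\rceil\lceil k/r\rceil$, which is the stated upper bound.

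The main obstacle lies in the upper-bound accounting: precisely quantifying the ``extra'' systematic reads that the base-code scheduler of \cite{8025778} is forced to issue, justifying that the $(r-1)$ factor is the correct count (i.e., only $\mathbf{P}_0$ escapes the overhead), and verifying that the cross-instance coupling introduced by the permutation $p_l^{(i)}\to p_{l+i}^{(i)}$ and the pairing rule (\ref{eq:PairingRule}) never forces additional downloads beyond those already counted in the per-instance base-code analysis.
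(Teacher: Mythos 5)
Your proposal takes essentially the same route as the paper: the lower bound comes from tallying the mandatory reads in Step 2 and Step 7 of Algorithm \ref{AlgRepair} while assuming Step 6 contributes nothing, and the upper bound from additionally charging the worst-case reads that Step 6 can demand from the $\lceil k/r\rceil$ scheduled columns of the $r-1$ arrays $\mathbf{P}_1,\ldots,\mathbf{P}_{r-1}$. The paper glosses over the ceiling arithmetic ($r\lceil\alpha/r^2\rceil$ versus $\lceil\alpha/r\rceil$) that you flag and attempt to absorb; your treatment is a bit more explicit about that slack, but the underlying accounting is identical.
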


\begin{proof}
	We read in total $k \lceil \frac{\alpha}{r}\rceil$ elements in the first for loop of Alg. \ref{AlgRepair}. Additionally, $(r-1)\lceil \frac{\alpha}{r}\rceil$ elements are read in Step 7 of the second for loop. Assuming that we do not read more elements in Step 6, we determine the lower bound as $k \lceil \frac{\alpha}{r}\rceil + (r-1)\lceil \frac{\alpha}{r}\rceil = (n-1)\lceil \frac{\alpha}{r}\rceil$ elements, i.e., the lower bound is $\frac{(n-1)}{\alpha} \lceil \frac{\alpha}{r}\rceil$ (since every element has a size of $\frac{1}{\alpha}$).
	To derive the upper bound, we assume that we read all elements $a_{i,j}$ from the extra $\lceil \frac{k}{r}\rceil $ columns of the arrays $\mathbf{P}_0, \ldots, \mathbf{P}_{r-1}$ in Step 6. Thus, the upper bound is $\frac{(n-1)}{\alpha} \lceil \frac{\alpha}{r} \rceil + \frac{(r-1)}{\alpha} \lceil \frac{\alpha}{r} \rceil
	\lceil \frac{k}{r} \rceil$.
\end{proof}
\begin{algorithm}
	\small
	\caption{Repair of systematic node $d_j$ 
		\newline
		\textbf{Input:} $j$ (where $j=0,\ldots, k-1$);
		\newline
		\textbf{Output:} $d_j$;
	    \newline
		\textbf{Note:} All indexes $i$ are determined by the expression $i \in D_{\rho,d_j}$}
	\label{AlgRepair}
	\begin{algorithmic}[1]
		\For {$v=0, v < r$}
		\State Access and transfer $(k-1) \lceil \sfrac{\alpha}{r^2}\rceil$ symbols $a_{i,j}$ from the $v$-th instance of all $k-1$ non-failed systematic nodes and $\lceil \sfrac{\alpha}{r^2}\rceil$ non-paired symbols $p_{i,j}$ from the $v$-th instance of the parity nodes;
		\State Repair $a_{i,j}$ from the $v$-th instance;
		\EndFor
		\For {$v=0, v < r$}
		\State Access and transfer the symbols $a_{i,l}$ from the $v$-th instance listed in the $i-$th row of the arrays $\mathbf{P}_0, \ldots, \mathbf{P}_{r-1}$ that have not been read in Step 2;
		\State Access and transfer $(r-1)\lceil \sfrac{\alpha}{r^2}\rceil$ paired symbols $p_{i,j}$ from the $v$-th instance;
		\State Repair $a_{i,j}$ by solving paired $r\times r$ linear systems of equations.
		\EndFor
	\end{algorithmic}
\end{algorithm}

\subsection{Repair of parity nodes}

Repair of a single parity node is given in Alg. \ref{AlgRepairP}.
\begin{algorithm}
	\small
	\caption{Repair of a parity node $p_l$ where $l=0,\ldots, r-1$
		\newline
		\textbf{Input:} $l$;
		\newline
		\textbf{Output:} $p_l$.}
	\label{AlgRepairP}
	\begin{algorithmic}[1]
		\State Access and transfer all symbols from instance $l$ of the systematic nodes and the non-failed parity nodes;
		\State Repair the symbols from $p_l$.
	\end{algorithmic}
\end{algorithm}

Without a proof (just a reference to Theorem 2 and 3 from \cite{8006804}) we give the following Proposition:
\begin{proposition}
	The repair bandwidth for a single parity node $\gamma_p$ reaches the lower bound given in Eq. (\ref{optimal}) for any sub-packetization level $\alpha$ including small $\alpha$, i.e.,
	\begin{equation}
	\label{LowerBound}
	\gamma_p = \frac{(n-1)}{\alpha} \lceil \frac{\alpha}{r} \rceil.
	\end{equation}
\end{proposition}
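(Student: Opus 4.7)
My plan is to combine a direct count of the symbols read in Algorithm \ref{AlgRepairP} with a structural verification that those symbols suffice to rebuild the failed parity, then note that the result matches the MSR lower bound. The counting step is immediate: Alg. \ref{AlgRepairP} fetches the instance-$l$ slice from each of the $k$ systematic nodes and each of the $r-1$ non-failed parity nodes, and every such slice contains $\lceil \alpha/r \rceil$ symbols, giving $(n-1)\lceil \alpha/r\rceil$ transferred symbols in total. Dividing by the $\alpha$ symbols stored at $p_l$ yields the claimed expression, which coincides with the right-hand side of Eq. (\ref{optimal}) up to the unavoidable ceiling imposed by integer-symbol granularity; because every read symbol is also transferred, the bound is simultaneously access-optimal.

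The heart of the proof is showing that the $(n-1)\lceil \alpha/r\rceil$ read symbols actually determine all $\alpha$ symbols of $p_l$. Using the permutation $p_l^{(i)}\to p_{l+i}^{(i)}$ of Alg. \ref{AlgConstruct} together with the pairing rule of Eq. (\ref{eq:PairingRule}), I would first describe what $p_l$ stores instance-by-instance: at instance $l$ there is an unpaired diagonal entry that is a base-code parity of the instance-$l$ systematic data, and at each instance $i\neq l$ there is a paired combination of the form $\theta\, p_{l-i}^{(i)}+p_{i-l}^{(l)}$ (indices modulo $r$). The diagonal entry and the companion $p_{i-l}^{(l)}$ are both base-code parities on the instance-$l$ systematic data and are therefore computable directly from the read systematic slice. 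The remaining unknown $p_{l-i}^{(i)}$ can then be extracted from the paired entry read at instance $l$ of the non-failed parity $p_i$, whose structure is ``dual'' to the one we wish to reconstruct and whose other summand is again a base-code parity on instance-$l$ systematic data.

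The main obstacle I expect is the algebraic bookkeeping: confirming that the correspondence between unknowns $p_{l-i}^{(i)}$ on $p_l$ and paired readings at the instance-$l$ slices of the remaining $p_j$'s is one-to-one as $i$ ranges over $\{0,\dots,r-1\}\setminus\{l\}$, and that the $2\times 2$ linear systems implicit in the extractions are nonsingular under the pairing coefficients prescribed by Eq. (\ref{eq:PairingRule}). This is exactly the structural content of Theorems 2 and 3 of \cite{8006804}, so I would transplant their argument, with the only change being that the base code here is a HashTag code rather than a generic MSR code; its admissibility, together with the MDS guarantee needed to ensure the small linear systems are invertible over $\mathbb{F}_q$, follows from Lemma \ref{MDS}. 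Combining the count with this reconstruction argument then yields the equality in Eq. (\ref{LowerBound}).
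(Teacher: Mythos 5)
Your proposal is correct and takes the same route as the paper, which in fact gives no proof of its own for this proposition and simply cites Theorems 2 and 3 of \cite{8006804} --- exactly the reference you lean on for the structural reconstruction. You helpfully spell out the $(n-1)\lceil\alpha/r\rceil$ symbol count and the pairing bookkeeping, both of which are sound; the only slight over-statement is that each unknown $p_{l-i}^{(i)}$ is recovered by a one-step subtraction from the instance-$l$ slice of $p_i$ once $p_{i-l}^{(l)}$ has been computed from the systematic data (the implicit $2\times 2$ system is triangular), so no invertibility condition beyond $\theta_{i,l}\neq 0$ is actually needed, and Lemma~\ref{MDS} is used to certify MDS-ness rather than the parity-repair extractions themselves.
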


\subsection{Performance Analysis} \label{perf}
We compare the average amount of data read and downloaded during a repair of a single node taking into account all nodes (systematic and parity nodes). HashTag+ codes outperform both Piggyback 2 and HashTag codes for any code parameters as it is shown in Fig. \ref{piggy}. Compared to HashTag codes, the lower repair bandwidth comes at the cost of an increased sub-packetization of factor $r$. HashTag+ codes offer savings of up to 40\% in the average amount of data accessed and transferred during repair compared to Piggyback 2.


\begin{figure}
	\centering
	\begin{tikzpicture}[scale=1.3]
	\begin{axis}[
	xlabel=\text{\scriptsize Code parameters $(n, k)$},
	ylabel={\scriptsize Avg. data transferred as \% of file size},
	xmin=-0.2, xmax=6.3,
	ymin=29, ymax=100.2,
	xtick={0,1,2,3,4,5,6},
	ytick={30, 40, 50, 60, 70, 80, 90, 100},
	xticklabels={\scriptsize (12$\text{,}$10), \scriptsize (14$\text{,}$12), \scriptsize (15$\text{,}$12), \scriptsize (12$\text{,}$9), \scriptsize (14$\text{,}$10), \scriptsize (16$\text{,}$12), \scriptsize (20$\text{,}$15), \scriptsize (24$\text{,}$18)}
	]
	\addplot[
	color=black,
	mark=square,
	]
	coordinates {
		(0, 100)(1, 100)(2, 68)(3, 70)(4, 66)(5, 63.5)(6, 62)(7, 60)
	};
	
	\addplot[
	color=black,
	mark=*,
	]
	coordinates {
		(0, 65.83)(1, 64.86)(2, 60.936)(3, 60.88)(4, 57.5)(5, 55)(6, 55.125)(7, 55.208335)
	};
	\addplot[
	color=black,
	mark=triangle,
	]
	coordinates {
		(0, 58.3)(1, 58.31)(2, 48.72)(3, 46.065)(4, 38.21)(5, 37.8125)(6, 36.4575)(7, 35.53)
	};
	\legend{\scriptsize Piggyback 2\\ \scriptsize HashTag \\ \scriptsize HashTag+\\}
	
	\end{axis}
	\end{tikzpicture}
	\vspace{-0.4cm}
	\caption{Average data read and transferred for repair of any single node with Piggyback 2 \cite{7949040} for $\alpha=4\times(2r-3)$, HashTag \cite{8025778} for $\alpha=8$, and HashTag+ for $\alpha=8 \times r$.}
	\label{piggy}
\end{figure}

\section{Conclusions}\label{conc}
We presented a general construction of a family of systematic MDS codes called HashTag+ codes that reaches the lower bound of the repair bandwidth for any single failure of all nodes when $\alpha=r^{\lceil \sfrac{n}{r} \rceil}$. HashTag+ codes have a high-rate and they have a flexible sub-packetization level ($4 \leq \alpha\leq r^{\lceil\sfrac{n}{r}\rceil}$). They also achieve the MSR point for repair of single parity node for sub-packetization levels lower than or equal to the maximal exponential value of $r^{\lceil\sfrac{n}{r}\rceil}$. Additionally they are access-optimal i.e. they access and transfer the same amount of data.

HashTag+ codes are the first explicit construction in the literature that repairs optimally the parity nodes even for small sub-packetization levels.
The repair bandwidth for the systematic nodes is as close as possible to the lower bound when $\alpha < r^{\lceil \sfrac{n}{r} \rceil}$.  

\bibliographystyle{IEEEtran}
\bibliography{refer}

\end{document}